\documentclass[sigconf,nonacm]{acmart}

\usepackage{amsmath}
\usepackage{booktabs}
\usepackage{array}
\usepackage{tikz}
\usetikzlibrary{arrows.meta,positioning,calc,patterns}

\newcommand{\Kset}{\mathcal{K}}
\newcommand{\argmax}{\operatorname*{arg\,max}}

\newcommand{\admPreRows}{1,400}

\newcommand{\prGpuShareHi}{76.1}
\newcommand{\prGpuShareLo}{12.4}
\newcommand{\prSerialBillHi}{8.7}
\newcommand{\prSerialBillLo}{7.7}
\newcommand{\prSerialCostHi}{2.0}
\newcommand{\prSerialCostLo}{1.3}

\newcommand{\rtAdditionalCorrect}{19}
\newcommand{\rtAdditionalWrong}{8}
\newcommand{\rtBuyFixedConst}{0.1225}
\newcommand{\rtBuyFixedError}{0.1600}
\newcommand{\rtBuyFixedOutside}{2.1200}
\newcommand{\rtBuyFixedWait}{0.9678}
\newcommand{\rtBuyMeteredConst}{0.1514}
\newcommand{\rtBuyMeteredError}{0.2319}
\newcommand{\rtBuyMeteredOutside}{1.5362}
\newcommand{\rtBuyMeteredWait}{1.5183}
\newcommand{\rtCoderErrorHigh}{1.8}
\newcommand{\rtCoderErrorLow}{0.5}
\newcommand{\rtIntegrationCloudLoss}{8.3}
\newcommand{\rtIntegrationFee}{28.9}
\newcommand{\rtIntegrationFeeHi}{34.7}
\newcommand{\rtIntegrationFeeLo}{20.3}
\newcommand{\rtIntegrationJointHigh}{61.4}
\newcommand{\rtMainProfiles}{21}
\newcommand{\rtNemotronErrorHigh}{0.375}
\newcommand{\rtNemotronErrorLow}{0.125}

\newcommand{\rtThresholdConst}{0.1031}
\newcommand{\rtThresholdError}{0.4211}

\newcommand{\rtThresholdWait}{2.9670}

\newcommand{\svNewCells}{66}

\newcommand{\svNewContrasts}{18}

\newcommand{\svNewPreRows}{896}

\setcopyright{none}
\date{}

\title[Who Pays for Thinking?]{Who Pays for Thinking? Billing Hidden Reasoning Tokens in Model--Cloud Contracts}

\begin{document}

\author{Chao Feng}
\affiliation{%
  \department[0]{Communication Systems Group}
  \department[1]{Department of Informatics}
  \institution{University of Zurich}
  \city{Zurich}
  \country{Switzerland}}
\email{cfeng@ifi.uzh.ch}

\author{Burkhard Stiller}
\affiliation{%
  \department[0]{Communication Systems Group}
  \department[1]{Department of Informatics}
  \institution{University of Zurich}
  \city{Zurich}
  \country{Switzerland}}
\email{stiller@ifi.uzh.ch}

\renewcommand{\shortauthors}{Chao Feng and Burkhard Stiller}

\clubpenalty=10000
\widowpenalty=10000
\displaywidowpenalty=10000

\begin{abstract}
In token-priced language model services, users can see the prompts they submit and the answers they receive, but may also pay for internal reasoning and candidate answers that are not shown in the final response. This paper develops a three-party service model to study how charging for this intermediate computation shapes providers' decisions and social welfare. Users purchase answers from a model firm that pays an execution cloud, with either supplier controlling the computation budget. The model separates user billing from upstream payments and identifies when metered pricing encourages excessive generation, when fixed service prices restrict valuable computation, and how responsibility for execution costs changes these choices. A LiveCodeBench study measures answer quality, latency, token use, and resource consumption across five model deployments, with numerical billing comparisons for the three covered by archived tariffs. At equal average charges on the reference workload, the calibrated comparisons show that billing structure changes both service participation and the computation budgets suppliers select. For accepted services with different budgets, the measured outcomes define the user-value thresholds at which their welfare ranking reverses. Metered pricing yields higher welfare when the additional inference it induces creates net social value, whereas fixed service pricing is preferable when the costs of that computation outweigh its benefits.
\end{abstract}

\ccsdesc[500]{Theory of computation~Algorithmic game theory and mechanism design}
\ccsdesc[300]{Applied computing~Economics}
\keywords{reasoning tokens, metered pricing, fixed service pricing, revenue sharing, inference economics, cloud contracts}

\maketitle
\hypersetup{pdfauthor={Chao Feng and Burkhard Stiller}}

\section{Introduction}
\label{sec:introduction}

Artificial intelligence (AI) applications built on large language models (LLMs), including chatbots and coding assistants, have become an important part of the Web. Similarweb estimates that generative AI platforms received an average of 9.5 billion Web visits per month between June 2025 and May 2026 \cite{similarweb2026genai}. Token-priced AI services can charge for internal reasoning and for candidates that an application generates but does not deliver to its end user \cite{google2026thinking,google2026candidates}. How does charging for this unseen generation shape providers' computation choices, and when does the additional computation create enough value for users to justify its cost?

In cloud-hosted services, users pay a model firm that purchases computation from a cloud provider. Existing studies analyze incentives to expand inference, price menus for users with different valuations, and manipulation of reported token counts \cite{artola2026games,bergemann2026menu,artola2026overcharging}. These studies focus on the user--provider transaction, leaving the model--cloud allocation of incentives largely implicit. A larger bill can reward the model firm while additional execution costs fall on the cloud, making both the payment between them and control over generation relevant to service provision.

This paper compares metered billing for intermediate generation with fixed service pricing at equal average charges on the reference workload. A three-party model separates this retail choice from the payment to the execution cloud and the right to choose the attempt budget. Payments for compute time, fixed capacity fees, and revenue sharing give the suppliers different returns to another candidate. The analysis determines which services satisfy participation and selects the controller's most profitable budget, retaining ties and exit. It identifies when metered pricing supports more computation and when that computation improves welfare. Pooling, which submits multiple tasks in one generation call, enters through two channels. A lower cost increment can change the cloud's preferred budget, and lower cost levels can make a previously rejected budget tradable. Either response can lower welfare even when each fixed service becomes cheaper.

A study on 106 LiveCodeBench tasks \cite{jain2024livecodebench} measures correct and incorrect delivery, latency, token generation, and resource use across five model deployments. These measurements, together with archived tariffs for three deployments, provide the inputs for the numerical pricing comparison. For these three deployments, metered bills grow faster than execution costs in every measured configuration as the budget increases. At equal average charges on the reference workload, the calibrated comparisons show how these incentives and supplier participation produce different budgets. The welfare analysis identifies when the additional inference induced by metered billing creates enough value to justify its costs and when a fixed service price produces a better outcome. These boundaries depend on the model's quality gains, execution costs, and the losses users associate with errors and waiting. Supplier profits explain the budget changes, and user purchase thresholds identify where those predicted services can be traded.

The contribution is a calibrated comparison of retail pricing under joint and separate supplier participation. Separating the model firm and cloud can leave a jointly profitable service unprovided when the upstream payment assigns a loss to one party. With a fixed capacity fee, the low-budget bill can be too small for the model firm while high-budget cost is too large for the cloud. The integrated benchmark exposes this difference, while measured service profiles determine the available budgets and the user-value conditions governing their welfare.

\section{Problem Formulation}
\label{sec:model}

Consider a model firm serving a fixed cohort of requests using resources purchased from an execution cloud. The comparison holds the trained model, execution configuration, and answer-selection rule fixed.

\subsection{Service profiles}

The model firm or cloud selects an attempt budget $k$ from a finite set $\Kset$. Each budget generates candidate answers before the selection rule returns an answer or an abstention. Its service profile is
\begin{equation}
 x_k=(q_k^+,q_k^-,\tau_k,H_k,t_k,C_k),
 \label{eq:service-profile}
\end{equation}
where $q_k^+$ and $q_k^-$ are the probabilities of correct and incorrect delivery, $\tau_k$ is latency, and $H_k$ is the number of billable intermediate tokens. The quantities $t_k$ and $C_k$ denote billed graphics processing unit (GPU) hours and real resource cost, respectively. All quantities are expected amounts per request, with shared resource use allocated over the fixed cohort. Abstention has probability $1-q_k^+-q_k^-$.

\subsection{User-pricing rules}

Metered pricing $A$ charges for intermediate generation, whereas fixed service pricing $D$ includes it in a predetermined price,
\begin{equation}
 B_A(k)=b+p_hH_k,\qquad B_D(k)=P.
 \label{eq:simplebill}
\end{equation}
Here $b$ is a common base charge, $p_h$ is the intermediate-token rate, and $P$ is independent of the selected budget. Both rules charge for an executed request, including one that ends in abstention. Their price levels are aligned using the reference-workload rule defined in Section~\ref{sec:contract-setting}.

For correct-answer value $v$, error loss $d$, waiting cost $\ell$, and outside-option value $u_0$, the user's expected surplus under pricing rule $j\in\{A,D\}$ is
\begin{equation}
 U_j(k)=vq_k^+-dq_k^--\ell\tau_k-B_j(k)-u_0.
 \label{eq:user}
\end{equation}

\subsection{Welfare comparison}

Social welfare is the value of the service after deducting its real costs,
\begin{equation}
 W(k)=vq_k^+-dq_k^--\ell\tau_k-C_k-u_0.
 \label{eq:welfare}
\end{equation}
The option $\varnothing$ denotes not undertaking the service and has zero incremental welfare. The socially efficient outcomes satisfy
\begin{equation}
 \Kset^\star=\argmax_{k\in\Kset\cup\{\varnothing\}}W(k).
 \label{eq:efficient-budgets}
\end{equation}
For a given upstream contract and control assignment, let $\mathcal O_j$ denote the service outcomes of pure subgame-perfect equilibria under pricing rule $j$, including rejection and tied choices. Section~\ref{sec:incentives} derives these outcomes from the parties' incentives and participation conditions. The welfare comparison is
\begin{equation}
 \mathcal G=\bigl\{W(k_A)-W(k_D)\,\bigm|\,
 k_A\in\mathcal O_A,\ k_D\in\mathcal O_D\bigr\}.
 \label{eq:retail-welfare-comparison}
\end{equation}
The problem is to identify the conditions under which metered or fixed pricing induces higher welfare. Metered pricing is strictly better across all these outcomes when $\min\mathcal G>0$. Fixed pricing is strictly better when $\max\mathcal G<0$. Other cases retain the dependence on equilibrium selection.

\section{Method}
\label{sec:incentives}

The analysis derives the conditions under which metered billing or fixed service pricing yields higher social welfare. It solves for service participation and the controller's budget choice while holding service profiles, upstream contract terms, and control rights fixed.

\begin{figure*}[t]
 \centering
 \includegraphics[width=0.82\textwidth]{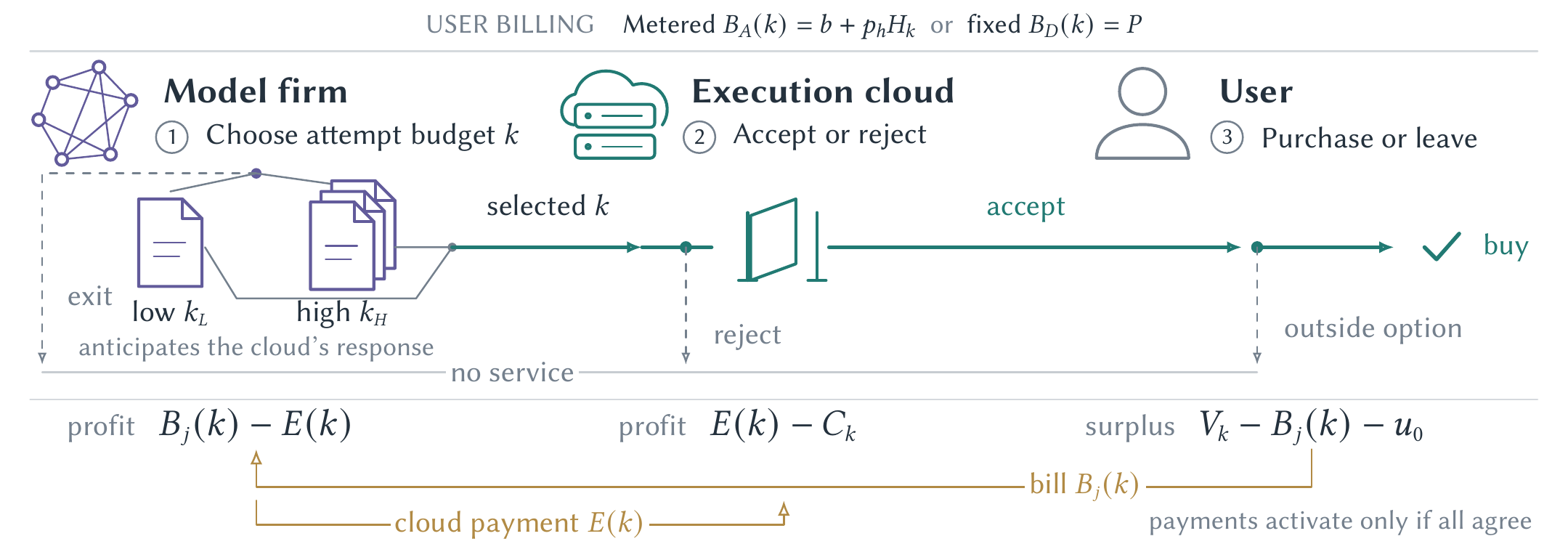}
 \caption{Service decisions and payments.}
 \Description{Under metered or fixed user pricing, the model firm chooses a low or high attempt budget or exits. The cloud accepts or rejects. The user purchases or takes an outside option. The trade path is shown in teal and no-trade exits in dashed gray. Conditional on trade, the user pays B sub j of k to the model firm and the model firm pays E of k to the cloud. The model firm's profit is the user bill minus the upstream payment. The cloud's profit is the upstream payment minus resource cost. User surplus is service value minus the bill and the outside-option value. The diagram shows model control. Under cloud control, the cloud chooses the budget and the model firm accepts or rejects.}
 \label{fig:service-game}
\end{figure*}

\subsection{Contract setting}
\label{sec:contract-setting}

The two-budget results use $\Kset=\{k_L,k_H\}$ with $k_H>k_L$, holding the model's native reasoning mode fixed. Training cost is sunk. For a fixed user-pricing rule $j$, write $B_k=B_j(k)$ and $U_k=U_j(k)$. A reference distribution $w_k$, fixed before service choice, aligns the two prices through
\begin{equation}
 P=\sum_{k\in\Kset}w_k B_A(k),\qquad \sum_{k\in\Kset}w_k=1.
 \label{eq:price-reference}
\end{equation}
The expected value before payment is
\begin{equation}
 V_k=vq_k^+-dq_k^--\ell\tau_k.
 \label{eq:value}
\end{equation}
Risk-neutral users evaluate the publicly committed distribution of quality, bills, and latency. Their outside option can include local inference, net of its resource costs.

User pricing shapes the model firm's revenue, while the upstream payment determines who benefits from additional computation. Under GPU-hour purchasing, the firm pays $E(k)=r_g t_k$ at a contracted rate $r_g>0$. Its token revenue and cloud payment can grow at different rates. Microsoft documents OpenAI's Azure commitments and capped revenue-sharing payments \cite{microsoft2025commitment,microsoft2026share}. Revenue sharing gives the cloud a direct stake in the service bill. The firm transfers a fraction $\alpha$ of the user's payment,
\begin{equation}
 E(k)=\alpha B_k,\qquad 0<\alpha<1.
 \label{eq:transfer}
\end{equation}
In the baseline, the cloud bears execution cost and the firm's additional cost is zero. Metered generation raises both suppliers' revenues under sharing. Fixed service pricing holds both revenues constant as the cloud's execution cost rises. A fixed capacity fee provides the alternative $E(k)=L$, where $L$ is the per-request share of a fee offered for the cohort, payable on acceptance.

Control $\iota\in\{M,C\}$ belongs to the firm or cloud independently of payment. Cloud control represents managed inference with delegated scheduling and retries, while model control represents a firm using its own scheduler. These arrangements motivate treating control as an institutional variable.

Pooling permission $\pi\in\{0,1\}$ changes the feasible implementation menu. The cost-only pooling comparison holds quality, bills, and latency fixed at each budget. Its minimum resource cost is $C_\pi(k)$, with $C_1(k)\leq C_0(k)$ because the old configuration remains available. Appendix~\ref{sec:identification} treats implementation-dependent bills and quality.

The game models an offer to serve a future request cohort. Terms, pooling permission, and the menu are fixed before the controller chooses a budget or exit. The other supplier accepts or rejects, and the user purchases or takes their outside option. Payment and capacity activation follow acceptance by all parties. Figure~\ref{fig:service-game} shows model control, with the first two roles exchanged under cloud control. Rejection and exit yield zero incremental payoffs.

Amazon EC2's one- and three-year commitments illustrate slow price adjustment \cite{aws2026reserved}. The activation sequence and fixed cohort share define the modeled service commitment. Appendix~\ref{app:contract-benchmarks} discusses flexible pricing and contract design.

For an activated service, supplier profits are
\begin{align}
 \Pi_M(k)&=B_k-E(k),\label{eq:profits}\\
 \Pi_C(k)&=E(k)-C_\pi(k).\nonumber
\end{align}
Adding user surplus recovers equation~\eqref{eq:welfare} with $C_k=C_\pi(k)$. Each result states whether strict participation is required at both budgets or only at realized trades. Zero-profit and zero-surplus ties retain acceptance and rejection.

Under strict user purchase, the integrated benchmark maximizes $B_j(k)-C_\pi(k)$ over budgets and exit. Separation additionally requires nonnegative profit for each supplier.

\subsection{How pricing changes computation and welfare}
\label{sec:pricing-analysis}

In Figure~\ref{fig:service-game}, rejection branches determine the available services, and supplier payoffs rank the accepted budgets. Summing the three payoffs gives welfare. The following results connect these decisions to user pricing.

Consider two budgets $k_H>k_L$, and write $\Delta X=X_H-X_L$. The baseline assigns execution cost to the cloud, with no additional model-firm cost. For a fixed upstream contract, $E_j(k)$ denotes the payment under user-pricing rule $j$.

\begin{proposition}[Budget choice]\label{prop:retail-budget}
Suppose both budgets give the user and both suppliers strictly positive payoffs under each pricing rule. The controller's high-minus-low profit is given in Table~\ref{tab:retail-margins}. A positive difference selects $k_H$, a negative difference selects $k_L$, and zero retains both.
\end{proposition}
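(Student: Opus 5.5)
The plan is to reduce the statement to backward induction on the game in Figure~\ref{fig:service-game} and then compute the controller's profit difference entry by entry. By hypothesis, at both budgets and under each pricing rule $j\in\{A,D\}$, the user's surplus $U_j(k)$ and both supplier profits $\Pi_M(k),\Pi_C(k)$ in equation~\eqref{eq:profits} are strictly positive. Work backward from the last mover.

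First, the user strictly prefers to purchase at either budget. Second, the non-controlling supplier strictly prefers to accept, anticipating purchase. So in every pure subgame-perfect equilibrium, both $k_L$ and $k_H$ lead to activated service. Third, the controller's exit yields zero, while either budget yields a strictly positive profit, so exit is strictly dominated. The controller's problem therefore reduces to comparing $\Pi_\iota(k_H)$ with $\Pi_\iota(k_L)$ for $\iota\in\{M,C\}$. A positive difference selects $k_H$, a negative one selects $k_L$, and a zero difference leaves both as equilibrium outcomes, since ties are retained.

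It remains to verify the entries of Table~\ref{tab:retail-margins}. Do this by substituting the bills from equation~\eqref{eq:simplebill} and each upstream contract into $\Delta\Pi_M=\Delta B-\Delta E$ and $\Delta\Pi_C=\Delta E-\Delta C_\pi$. Under metered pricing $\Delta B_A=p_h\Delta H$; under fixed pricing $\Delta B_D=0$, and the reference alignment in equation~\eqref{eq:price-reference} does not affect differences. The three upstream contracts give the following entries.
\begin{itemize}
\item GPU-hour purchasing, $E=r_g t_k$: the model-control entries are $p_h\Delta H-r_g\Delta t$ and $-r_g\Delta t$. The cloud-control entry is $r_g\Delta t-\Delta C_\pi$ under both rules.
\item Revenue sharing, equation~\eqref{eq:transfer}: the model-control entries are $(1-\alpha)p_h\Delta H$ and $0$. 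The cloud-control entries are $\alpha p_h\Delta H-\Delta C_\pi$ and $-\Delta C_\pi$.
\item Fixed capacity fee, $E=L$: the model-control entries are $p_h\Delta H$ and $0$. The cloud-control entry is $-\Delta C_\pi$ under both rules.
\end{itemize}

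The main obstacle is bookkeeping rather than substance. Each table entry must match the convention that the baseline places all execution cost on the cloud, and $C_\pi$ must be evaluated at the given pooling permission. I will also check that the zero-difference cases are stated consistently with the retained ties. Examples are revenue sharing under fixed pricing with model control, which always gives a tie, and the capacity fee under fixed pricing. I also need to confirm that strict positivity is what rules out rejection ties, so that the outcome set $\mathcal O_j$ contains no $\varnothing$.
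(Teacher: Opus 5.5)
Your proposal is correct and follows essentially the same route as the paper: you subtract profits across budgets to get $\Delta\Pi_M=\Delta B-\Delta E$ and $\Delta\Pi_C=\Delta E-\Delta C$, substitute $\Delta B_A=p_h\Delta H$, $\Delta B_D=0$ and each upstream payment rule to recover every entry of Table~\ref{tab:retail-margins}, and use strict participation to make both budgets available and exclude exit. Your explicit backward-induction pass (user purchase, then acceptance by the non-controlling supplier, then the controller's comparison against exit) spells out the paper's one-line remark that strict participation removes rejection ties and exit.
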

\begin{proof}
Subtracting profits across budgets gives
\begin{equation}
 \Delta\Pi_M^j=\Delta B_j-\Delta E_j,\qquad
 \Delta\Pi_C^j=\Delta E_j-\Delta C.
 \label{eq:retail-margins}
\end{equation}
Substituting each payment rule and using $\Delta B_A=p_h\Delta H$ and $\Delta B_D=0$ gives the table. Strict participation makes both budgets available and excludes exit.
\end{proof}

\begin{table}[t]
 \caption{Private returns to a larger budget.}
 \label{tab:retail-margins}
 \centering
 \begin{tabular}{@{}llcc@{}}
 \toprule
 Payment & Control & Metered $A$ & Fixed $D$ \\
 \midrule
 GPU hours & $M$ & $p_h\Delta H-r_g\Delta t$ & $-r_g\Delta t$ \\
 GPU hours & $C$ & $r_g\Delta t-\Delta C$ & $r_g\Delta t-\Delta C$ \\
 Capacity fee & $M$ & $p_h\Delta H$ & $0$ \\
 Capacity fee & $C$ & $-\Delta C$ & $-\Delta C$ \\
 Revenue share & $M$ & $(1-\alpha)p_h\Delta H$ & $0$ \\
 Revenue share & $C$ & $\alpha p_h\Delta H-\Delta C$ & $-\Delta C$ \\
 \bottomrule
 \end{tabular}
\end{table}

For positive $\Delta H$, $\Delta t$, and $\Delta C$, fixed pricing strictly favors the smaller budget under model control with GPU-hour purchasing and under cloud control with either a capacity fee or sharing. A cloud paid a capacity fee makes the same choice under metered billing. With zero model-firm cost, fixed pricing leaves the firm indifferent under capacity fees and sharing.

Participation can change the outcome even when the controller's ranking remains unchanged. Revenue sharing makes this connection explicit because the cloud accepts a budget only when its share covers execution cost.

\begin{proposition}[A participation-induced switch]\label{prop:retail-participation}
Suppose the user strictly prefers purchasing both budgets under either pricing rule. Under revenue sharing, if
\begin{equation}
 \alpha B_A(k_L)<C_L<\alpha P<C_H<\alpha B_A(k_H),
 \label{eq:retail-participation-switch}
\end{equation}
metered billing uniquely selects $k_H$, while fixed pricing uniquely selects $k_L$, under either assignment of control.
\end{proposition}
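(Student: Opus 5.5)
The argument is a participation check followed by backward induction. Under revenue sharing $E_j(k)=\alpha B_j(k)$, so the cloud's profit at budget $k$ is $\alpha B_j(k)-C_k$. The model firm's profit is $(1-\alpha)B_j(k)$, which is strictly positive whenever the bill is positive. I take positive bills as given, since they are implicit in $\alpha B_A(k_L)<C_L$ with $C_L\ge 0$ only if $B_A(k_L)$ is sign-compatible. If needed I will note that $B_j(k)>0$ follows from $\alpha P>C_L>\alpha B_A(k_L)$ together with the positivity of costs. The user strictly purchases both budgets by hypothesis. So the only binding participation constraint is the cloud's, and I will read off the sign of its profit at each budget under each rule from the chain in equation~\eqref{eq:retail-participation-switch}.

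\textbf{Metered billing.} The chain gives $\alpha B_A(k_L)<C_L$, so the cloud strictly loses at $k_L$. It also gives $C_H<\alpha B_A(k_H)$, so the cloud strictly gains at $k_H$.

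\textbf{Fixed pricing.} Here $B_D(k)=P$ at both budgets. Then $C_L<\alpha P$ gives a strict cloud gain at $k_L$, and $\alpha P<C_H$ gives a strict loss at $k_H$.

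Thus under each rule exactly one budget is tradable with all parties strictly positive: $k_H$ under $A$ and $k_L$ under $D$.

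Next I solve the game in Figure~\ref{fig:service-game} backward for each control assignment, using the convention that rejection and exit yield zero.

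\textbf{Model control.} The cloud strictly rejects the losing budget, so that branch gives the firm zero. At the tradable budget the cloud strictly accepts, the user strictly buys, and the firm earns $(1-\alpha)B_j(k)>0$. That beats both exit and the rejected budget, so the firm's choice is unique.

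\textbf{Cloud control.} Proposing the losing budget would give the cloud a strictly negative payoff if accepted, or zero if rejected. Proposing the tradable budget gives a strictly positive payoff, because the firm strictly accepts and the user strictly buys. This strictly dominates both exit and the other budget.

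Zero-profit ties never arise because every inequality in the chain is strict. So the outcome sets are singletons: $\mathcal O_A=\{k_H\}$ and $\mathcal O_D=\{k_L\}$.

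The main obstacle is bookkeeping rather than depth. Proposition~\ref{prop:retail-budget} cannot be invoked directly, because its hypothesis of strict participation at both budgets fails here. The point of this result is that participation, not the Table~\ref{tab:retail-margins} ranking, drives the selection. The argument must therefore handle the off-path losing budget explicitly via rejection rather than by comparing margins. I will also confirm that the firm's positivity under model control does not require anything beyond $B_j(k)>0$, which follows from the stated inequalities since $C_k\ge 0$.
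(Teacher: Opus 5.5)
Your proposal is correct and follows the paper's proof. Like the paper, you read the sign of the cloud's profit $\alpha B_j(k)-C_k$ at each budget off the chain, note that the model firm keeps the positive share $(1-\alpha)B_j(k)$, and conclude that the single cloud-profitable budget beats exit and the rejected budget under either controller; your explicit backward induction only spells out what the paper states in one line, though the chain gives $P>0$ and $B_A(k_H)>0$ but not the sign of $B_A(k_L)$, which your cloud-control case correctly shows is irrelevant.
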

\begin{proof}
Under metered billing, the cloud loses money at $k_L$ and earns positive profit at $k_H$. Under fixed pricing, it earns positive profit at $k_L$ and loses money at $k_H$. The model firm retains a positive share of each bill. Each pricing rule therefore supports exactly one profitable service, which is preferred to exit by either controller.
\end{proof}

Fixed pricing redistributes revenue toward the smaller budget, changing participation. Proposition~\ref{prop:retail-budget} covers the complementary case where both budgets remain available and pricing changes their ranking.

\begin{proposition}[Welfare ranking]\label{prop:retail-welfare}
If metered billing selects $k_H$, fixed pricing selects $k_L$, and both services are purchased, their welfare difference is
\begin{align}
 W_A-W_D&=\Delta V-\Delta C\nonumber\\
 &=v\Delta q^+-d\Delta q^--\ell\Delta\tau-\Delta C.
 \label{eq:retail-welfare-gap}
\end{align}
Metered billing yields higher welfare when $\Delta V>\Delta C$. Fixed pricing yields higher welfare when $\Delta V<\Delta C$.
\end{proposition}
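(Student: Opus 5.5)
The plan is to evaluate welfare at the two realized outcomes directly from equation~\eqref{eq:welfare}, taking the selections as given by hypothesis. Under metered billing the realized service is $k_H$ and under fixed pricing it is $k_L$. Since both services are purchased and activated, equation~\eqref{eq:welfare} applies to each with $C_k=C_\pi(k)$. Recall that adding the user surplus~\eqref{eq:user} to the supplier profits~\eqref{eq:profits} cancels both the bill $B_j(k)$ and the upstream payment $E(k)$. The first step is therefore to note that $W$ depends on the budget but not on the pricing rule: the bill is a pure transfer. This gives $W_A=W(k_H)$ and $W_D=W(k_L)$ regardless of which upstream contract or control assignment produced the selections. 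That contract could be Proposition~\ref{prop:retail-budget} with Table~\ref{tab:retail-margins}, or the participation switch of Proposition~\ref{prop:retail-participation}.

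The second step is the subtraction. Using $V_k$ from equation~\eqref{eq:value}, write $W(k)=V_k-C_k-u_0$. The outside-option value $u_0$ is common to both budgets, since the user cohort is the same, so it cancels and
\[
W_A-W_D=(V_H-C_H)-(V_L-C_L)=\Delta V-\Delta C .
\]
Expanding $\Delta V=v\Delta q^+-d\Delta q^--\ell\Delta\tau$ by linearity of $V_k$ in the profile gives the second line of equation~\eqref{eq:retail-welfare-gap}. The sign statements then follow immediately: $W_A>W_D$ exactly when $\Delta V>\Delta C$, and the reverse inequality holds when $\Delta V<\Delta C$.

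The only point needing care, and the main potential obstacle, is that the outcome sets $\mathcal O_A$ and $\mathcal O_D$ might contain ties or exit. In that case $\mathcal G$ in equation~\eqref{eq:retail-welfare-comparison} would not be a singleton. I would handle this by reading the hypotheses as pinning down the outcomes: ``metered billing selects $k_H$'' and ``fixed pricing selects $k_L$'' with purchase, as under the strict conditions of Propositions~\ref{prop:retail-budget} and~\ref{prop:retail-participation}. Then $\mathcal G=\{\Delta V-\Delta C\}$, and the criteria $\min\mathcal G>0$ and $\max\mathcal G<0$ reduce to the stated inequalities. If ties were allowed, the same computation would still give the difference for the specific selected pair.
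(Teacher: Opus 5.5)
Your proposal is correct and matches the paper's proof: summing user surplus and supplier profits cancels the bill and upstream payment as transfers, and subtracting welfare at $k_L$ from welfare at $k_H$ gives $\Delta V-\Delta C$. Your remark on ties and a singleton $\mathcal G$ goes beyond what the paper's two-line proof says but is not needed for the stated result.
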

\begin{proof}
Summing user surplus and supplier profits cancels the retail bill and upstream payment. Subtracting welfare across the selected budgets gives the result.
\end{proof}

For $\Delta q^+>0$, the welfare ranking reverses at
\begin{equation}
 v^\star=\frac{d\Delta q^-+\ell\Delta\tau+\Delta C}{\Delta q^+}.
 \label{eq:retail-value-threshold}
\end{equation}
Above this threshold, the larger budget creates net social value, and fixed pricing induces insufficient computation. Below it, metered billing induces excessive computation. The empirical analysis estimates the quality, latency, and resource differences that determine this threshold and evaluates user participation at the same values.

When a pricing rule produces no service, its welfare is zero. At participation or budget ties, the analysis retains every pure equilibrium outcome and compares the resulting welfare-difference set $\mathcal G$.

\subsection{How pooling changes the pricing comparison}
\label{sec:pooling}

Pooling can reduce the cost of each budget while changing which budget suppliers choose. Consider revenue sharing at a fixed rate $\alpha$, and let $\pi=0,1$ denote the original and pooled configurations. To isolate the cost effect, hold quality, latency, and retail bills fixed at each budget. Define
\begin{equation}
 s_k=C_0(k)-C_1(k)\geq0,\qquad
 \Delta C_\pi=C_\pi(k_H)-C_\pi(k_L).
 \label{eq:retail-pooling-costs}
\end{equation}
The change in the cost increment satisfies $\Delta C_0-\Delta C_1=s_H-s_L$. Pooling makes the larger budget relatively cheaper when it saves more at $k_H$ than at $k_L$.

If the selected service changes from the original low budget to the pooled high budget, welfare changes by
\begin{equation}
 W_1(k_H)-W_0(k_L)=\Delta V-\Delta C_1+s_L.
 \label{eq:retail-pooling-welfare}
\end{equation}
The gain combines the value of additional computation with the saving at the original budget, less the pooled cost of increasing the budget.

Pooling affects two different decisions in Figure~\ref{fig:service-game}. Under cloud control, lower incremental cost can reverse the cloud's ranking of the two budgets. Under model control, lower cost can turn the cloud's rejection of the proposed high budget into acceptance.

\begin{proposition}[A change in budget incentives]\label{prop:retail-pooling-margin}
Suppose the cloud controls the budget and both budgets give all participants strictly positive payoffs under both pricing rules and configurations. If
\begin{equation}
 0<\Delta C_1,\qquad
 \Delta V+s_L<\Delta C_1<\alpha p_h\Delta H<\Delta C_0,
 \label{eq:retail-pooling-margin}
\end{equation}
pooling changes the metered-pricing outcome from $k_L$ to $k_H$ and lowers welfare. Under fixed service pricing, the cloud continues to choose $k_L$, and welfare rises by $s_L$.
\end{proposition}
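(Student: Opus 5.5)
The plan is to reduce the statement to Proposition~\ref{prop:retail-budget} applied separately in each configuration $\pi\in\{0,1\}$, and then compute the welfare changes from equation~\eqref{eq:welfare}. The hypotheses give strictly positive payoffs at both budgets for every participant, under both pricing rules and both configurations. Exit and rejection are therefore never chosen, and the cloud's choice depends only on the sign of its high-minus-low profit.

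Under revenue sharing with cloud control, Table~\ref{tab:retail-margins} gives this margin. With the cost increment $\Delta C_\pi$ it is $\alpha p_h\Delta H-\Delta C_\pi$ under metered pricing $A$ and $-\Delta C_\pi$ under fixed pricing $D$. The margin depends on $\pi$ only through the cost increment, because quality, latency, and bills are held fixed at each budget.

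\emph{Metered pricing.} For $\pi=0$, the condition $\alpha p_h\Delta H<\Delta C_0$ makes the margin negative, so $k_L$ is uniquely selected. For $\pi=1$, the condition $\Delta C_1<\alpha p_h\Delta H$ makes it positive, so $k_H$ is uniquely selected. The welfare change is then given by equation~\eqref{eq:retail-pooling-welfare}:
\[
W_1(k_H)-W_0(k_L)=\Delta V-\Delta C_1+s_L .
\]
The hypothesis $\Delta V+s_L<\Delta C_1$ makes this change strictly negative. To verify equation~\eqref{eq:retail-pooling-welfare}, write
\[
W_1(k_H)=V_H-C_1(k_H),\qquad
W_0(k_L)=V_L-C_0(k_L),\qquad
C_1(k_L)=C_0(k_L)-s_L .
\]
Here $V_k$ already nets out $u_0$, as in equation~\eqref{eq:value}.

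\emph{Fixed pricing.} The margin under $D$ is $-\Delta C_0<0$ for $\pi=0$ and $-\Delta C_1<0$ for $\pi=1$, using $\Delta C_1>0$. Since $\Delta C_0>\alpha p_h\Delta H>\Delta C_1>0$, both increments are positive, so $k_L$ is uniquely chosen in both configurations. Welfare at $k_L$ changes by $W_1(k_L)-W_0(k_L)=C_0(k_L)-C_1(k_L)=s_L$, because quality and latency are unchanged.

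There is no real obstacle; the argument only requires care on two points. First, strict participation must be used to exclude exit and rejection branches, so that ranking by margin is legitimate and the outcome sets $\mathcal O_j$ are singletons. Second, the chain of inequalities must be checked to make every relevant margin strictly signed. This includes $\Delta C_0>0$, which the inequality chain supplies, so the fixed-pricing choice of $k_L$ is strict and not a tie.
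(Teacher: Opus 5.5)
Your proposal is correct and follows the paper's proof. You sign the cloud's metered margin $\alpha p_h\Delta H-\Delta C_\pi$ in each configuration, note that the fixed-price margin $-\Delta C_\pi$ stays negative, and read off the welfare changes from equation~\eqref{eq:retail-pooling-welfare}. One cosmetic slip: equation~\eqref{eq:value} does not net out $u_0$, since welfare is $V_k-C_\pi(k)-u_0$; because $u_0$ cancels in every difference you take, this changes nothing.
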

\begin{proof}
The metered-billing margin changes from $\alpha p_h\Delta H-\Delta C_0<0$ to $\alpha p_h\Delta H-\Delta C_1>0$. Under fixed pricing, the margin remains negative at $-\Delta C_\pi$. Equation~\eqref{eq:retail-pooling-welfare} gives the stated welfare changes.
\end{proof}

Here, pooling lowers the cost of additional billable generation enough to change the cloud's choice. Every fixed service becomes weakly cheaper, but the selected service uses more computation than its additional value justifies.

Pooling can also change the outcome by making a previously rejected budget profitable.

\begin{proposition}[A change in service participation]\label{prop:retail-pooling-admission}
Suppose the model firm controls the budget, $\Delta B_A>0$, and users strictly prefer purchasing both budgets under both pricing rules. If
\begin{align}
 C_0(k_L)&<\alpha B_A(k_L),\nonumber\\
 C_1(k_H)&<\alpha B_A(k_H)<C_0(k_H),
 \label{eq:retail-pooling-admission}
\end{align}
pooling changes the metered-pricing outcome from $k_L$ to $k_H$. Welfare falls if $\Delta V+s_L<\Delta C_1$. If additionally
\begin{equation}
 C_0(k_L)<\alpha P<C_1(k_H),
 \label{eq:retail-fixed-low}
\end{equation}
fixed pricing continues to select $k_L$ in both configurations.
\end{proposition}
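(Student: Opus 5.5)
The argument works backward through the game of Figure~\ref{fig:service-game} under model control and revenue sharing. Payments are $E(k)=\alpha B_k$ and the cloud's profit is $\alpha B_k-C_\pi(k)$. Users strictly prefer to purchase both budgets, so the only binding participation constraints are the model firm's and the cloud's. The model firm's profit is $(1-\alpha)B_k$. This is strictly positive whenever $B_k>0$, which follows from the cloud-side inequalities, since $\alpha B_A(k_L)>C_0(k_L)\geq 0$ and $\alpha B_A(k_H)>C_1(k_H)\geq0$, with $B_D=P$ handled analogously. So in each configuration the only question is which budgets the cloud accepts. The model firm then picks among the accepted budgets by comparing $(1-\alpha)B_k$, or exits if none is accepted.

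For metered pricing, first take the original configuration $\pi=0$. Condition~\eqref{eq:retail-pooling-admission} gives $\alpha B_A(k_L)-C_0(k_L)>0$ and $\alpha B_A(k_H)-C_0(k_H)<0$. The cloud therefore accepts $k_L$ and strictly rejects $k_H$. The firm's profit at $k_L$ is strictly positive, so it strictly prefers proposing $k_L$ to exit or to a rejected $k_H$, and the outcome is $k_L$. With pooling, $\pi=1$, we have $C_1(k_L)\leq C_0(k_L)<\alpha B_A(k_L)$ and $C_1(k_H)<\alpha B_A(k_H)$, so the cloud accepts both budgets. The firm's high-minus-low margin is $(1-\alpha)\Delta B_A>0$, which matches the revenue-share, model-control entry of Table~\ref{tab:retail-margins} in Proposition~\ref{prop:retail-budget}. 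Hence it uniquely selects $k_H$. The welfare statement then follows directly from equation~\eqref{eq:retail-pooling-welfare}: $W_1(k_H)-W_0(k_L)=\Delta V-\Delta C_1+s_L<0$ exactly when $\Delta V+s_L<\Delta C_1$.

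For fixed pricing, the bill is $P$ at both budgets, so the cloud's acceptance at budget $k$ requires $\alpha P\geq C_\pi(k)$. Condition~\eqref{eq:retail-fixed-low} gives $C_1(k_L)\leq C_0(k_L)<\alpha P$, so $k_L$ is strictly accepted in both configurations. It also gives $C_0(k_H)\geq C_1(k_H)>\alpha P$, so $k_H$ is strictly rejected in both. The firm earns $(1-\alpha)P>0$ at $k_L$, so it proposes $k_L$ rather than exiting. Because $k_H$ is rejected, the firm's indifference between budgets under fixed pricing, a zero margin in Table~\ref{tab:retail-margins}, creates no tie. The main care point is ruling out ties and exit in this way, and checking that $P>0$ and $B_A(k)>0$ follow from the stated strict inequalities together with $C_\pi\geq 0$.
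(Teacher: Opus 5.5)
Your proposal is correct and follows essentially the same route as the paper's proof. The firm keeps $(1-\alpha)B_k$ and chooses among the budgets the cloud accepts, the cloud-side inequalities fix acceptance in each configuration and pricing rule, and equation~\eqref{eq:retail-pooling-welfare} gives the welfare sign; you are only more explicit than the paper in ruling out exit and ties, using $C_\pi\geq 0$ to get strictly positive bills and hence strictly positive firm profit.
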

\begin{proof}
Under metered billing, the model firm prefers $k_H$ because it retains $(1-\alpha)B_A(k)$. The cloud initially rejects that budget and accepts it after pooling. It accepts $k_L$ throughout. Under equation~\eqref{eq:retail-fixed-low}, the cloud accepts only $k_L$ under fixed pricing in either configuration. Equation~\eqref{eq:retail-pooling-welfare} establishes the welfare sign.
\end{proof}

This mechanism depends on cost levels. It can operate even when pooling saves the same amount at both budgets, leaving $\Delta C_\pi$ unchanged. If fixed pricing instead makes both budgets acceptable, the model firm earns $(1-\alpha)P$ at either budget, and both choices remain possible. The welfare comparison then retains that equilibrium multiplicity.

\subsection{Cost responsibility and larger budget menus}
\label{sec:cost-responsibility}

The preceding results assign execution cost to the cloud. If the model firm also bears a disjoint real cost $C_M(k)\geq0$, its profit and total welfare become
\begin{align}
 \Pi_M^j(k)&=B_j(k)-E_j(k)-C_M(k),\nonumber\\
 W(k)&=V_k-C_M(k)-C(k)-u_0.
 \label{eq:shared-cost-welfare}
\end{align}
Every model-firm margin in Table~\ref{tab:retail-margins} decreases by $\Delta C_M$, while the social cost increment becomes $\Delta C_M+\Delta C$.

This cost allocation changes the effect of fixed pricing. Under model control and revenue sharing, the two profit increments are
\begin{align}
 \Delta\Pi_M^A&=(1-\alpha)p_h\Delta H-\Delta C_M,\nonumber\\
 \Delta\Pi_M^D&=-\Delta C_M.
 \label{eq:retail-model-cost}
\end{align}
If $0<\Delta C_M<(1-\alpha)p_h\Delta H$ and both budgets strictly satisfy participation under both pricing rules, metered pricing selects $k_H$ while fixed pricing selects $k_L$. Bearing part of the additional cost turns the model firm's fixed-price indifference into a strict preference for less computation. The strict pooling results also persist for sufficiently small positive reallocations of a fixed total cost to the model firm, as shown in Appendix~\ref{app:generalizations}.

The same distinction between incentives and participation extends to larger budget menus. Fix a retail rule, a finite ordered menu $\Kset$, and the bill and upstream payment schedules across configurations. Assume $C_1(k)\leq C_0(k)$ for every $k$, and write $G_\pi(k)=B_k-E(k)-C_{M,\pi}(k)$ for the model firm's profit.

\begin{proposition}[Finite budget menus]\label{prop:finite-budgets}
Under cloud control, suppose users and the model firm strictly accept every budget in both configurations, and the cloud's best initial profit is positive. If $s_k=C_0(k)-C_1(k)$ is nondecreasing in $k$, the smallest and largest elements of
\[
 K_\pi=\argmax_{k\in\Kset}\{E(k)-C_\pi(k)\}
\]
weakly increase after pooling.

Under model control, define $A_\pi=\{k\in\Kset:E(k)>C_\pi(k)\}$. Suppose cloud participation has no zero-profit ties, $A_0\ne\varnothing$, users strictly accept every budget, and $G_\pi$ is positive and strictly increasing on each $A_\pi$. Then the selected budget is $k_\pi=\max A_\pi$, with $k_1\geq k_0$.
\end{proposition}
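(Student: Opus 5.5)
The two parts separate cleanly, and I will treat them in turn. In each part the bill schedule $B_k$ and the payment schedule $E(k)$ are held fixed across configurations, so pooling acts only through the cost schedules $C_\pi$ and $C_{M,\pi}$.

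\emph{Cloud control.} I first reduce the game to the cloud's choice problem. Users and the model firm strictly accept every budget in both configurations, so every $k\in\Kset$ is available to the cloud. The initial best cloud profit is positive. Because $C_1\le C_0$ pointwise, the pooled best profit is at least as large, so exit is strictly dominated in both configurations. The selected set is therefore exactly $K_\pi=\argmax_k f_\pi(k)$ with $f_\pi(k)=E(k)-C_\pi(k)$. Next I write $f_1(k)=f_0(k)+s_k$ with $s_k$ nondecreasing. This is a monotone comparative statics problem on a totally ordered finite set: $f_1-f_0$ is nondecreasing, so the family has increasing differences in $(k,\pi)$. I will prove the claim directly rather than cite Topkis. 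Take $k_0\in K_0$ and $k_1\in K_1$ with $k_1<k_0$. Optimality gives $f_0(k_0)\ge f_0(k_1)$ and $f_1(k_1)\ge f_1(k_0)$. Adding these inequalities gives $s_{k_1}\ge s_{k_0}$. Monotonicity gives $s_{k_1}\le s_{k_0}$, so equality holds and both inequalities are tight. Hence $k_1\in K_0$ and $k_0\in K_1$. Applying this exchange to $\min K_1$ against $\min K_0$ gives $\min K_1\ge\min K_0$: otherwise $\min K_0\in K_1$, but $\min K_1<\min K_0$ already lies in $K_0$, contradicting minimality. Applying it to $\max K_1$ against $\max K_0$ gives $\max K_1\ge\max K_0$: otherwise $\max K_0\in K_1$, which exceeds $\max K_1$, contradicting maximality.

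\emph{Model control.} I will argue by backward induction along Figure~\ref{fig:service-game}. Users strictly accept every budget. The cloud has no zero-profit ties, so it accepts $k$ exactly when $k\in A_\pi$ and rejects otherwise. The model firm's payoff is therefore $G_\pi(k)$ on $A_\pi$, and $0$ on proposals outside $A_\pi$ and on exit. Since $G_\pi$ is positive on $A_\pi$, exit and rejected proposals are strictly worse than any accepted budget. This requires $A_\pi\neq\varnothing$. We are given $A_0\neq\varnothing$. For $A_1$, note that $C_1\le C_0$ implies $E(k)>C_0(k)\ge C_1(k)$ for every $k\in A_0$, so $A_0\subseteq A_1$. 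Strict monotonicity of $G_\pi$ on $A_\pi$ makes the selection unique, with $k_\pi=\max A_\pi$. Finally, $A_0\subseteq A_1$ yields $k_1=\max A_1\ge\max A_0=k_0$.

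The main obstacle is the tie bookkeeping in the cloud-control case. The claim concerns the smallest and largest elements of possibly non-singleton argmax sets, so the exchange argument must be applied carefully at both ends. I also need to confirm that exit never enters the equilibrium set. That is exactly why the hypothesis of positive initial cloud profit is needed, combined with pointwise cost reduction. By contrast, the model-control part is mechanical once the inclusion $A_0\subseteq A_1$ is noted.
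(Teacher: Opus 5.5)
Your proposal is correct and takes essentially the same route as the paper. Writing $f_1=f_0+s$, the optimality inequality $0\le f_0(x)-f_0(y)\le s_y-s_x$ together with nondecreasing savings forces ties whenever a pooled optimum lies below an initial one, which orders the extreme optima, while strict downstream acceptance and positive initial cloud profit rule out exit; under model control you likewise use $A_0\subseteq A_1$ and the positive, strictly increasing $G_\pi$ to select $\max A_\pi$ uniquely, exactly as in the paper's appendix argument.
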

\begin{proof}
For cloud optima $x\in K_0$ and $y\in K_1$, let $f_\pi=E-C_\pi$. Optimality gives $0\leq f_0(x)-f_0(y)\leq s_y-s_x$. Nondecreasing savings order the extreme optima. Under model control, lower costs imply $A_0\subseteq A_1$, and the firm chooses each set's largest element. Appendix~\ref{app:generalizations} gives the full argument.
\end{proof}

The cloud-control result applies increasing differences \cite{topkis1978,milgrom1994}. Appendix~\ref{app:generalizations} shows that its savings condition is tight for monotonicity uniformly over baseline profits.

The revenue share also affects these choices. On a common strictly participating menu with nondecreasing $B_k$, increasing $\alpha$ weakly raises the extreme cloud optima because $\partial[\alpha B_k-C(k)]/\partial\alpha=B_k$. For nonnegative bills, it also expands the cloud's acceptance set. When the model firm bears its own cost, its participation condition must be checked alongside this expansion.

For any realized transition between traded budgets $k_0$ and $k_1$, write total cost as $T_\pi=C_{M,\pi}+C_\pi$. The welfare change is
\begin{align}
 W_1(k_1)-W_0(k_0)
 &=V_{k_1}-V_{k_0}\nonumber\\
 &\quad-[T_1(k_1)-T_1(k_0)]\nonumber\\
 &\quad+T_0(k_0)-T_1(k_0).
 \label{eq:menu-welfare}
\end{align}
A budget increase is beneficial when its additional value and the saving at the original budget exceed its additional cost. The experiments compare pricing rules within a fixed configuration using equations~\eqref{eq:retail-welfare-gap}--\eqref{eq:retail-value-threshold}.

\section{Experiments}
\label{sec:measurement}

The evaluation combines measured service profiles with the contract model to compare metered billing and fixed pricing at equal average charges on the reference workload. It determines participation and budget choice, then identifies the user-value thresholds that reverse their welfare ranking. Sensitivity covers deployments, execution configurations, upstream terms, and resource costs.

\subsection{Experimental setup and calibration}
\label{sec:retail-setup}

The study uses 106 evaluation tasks from LiveCodeBench \cite{jain2024livecodebench}, with 16 separate tasks reserved for calibration. Five deployments cover Qwen3.8 with thinking enabled and disabled, Qwen3.5 thinking, Nemotron thinking, and Coder. Each deployment is evaluated with one and eight candidate attempts in three \emph{repetitions}, denoted $r_0$, $r_1$, and $r_2$. All three use the same 106 tasks, with a different set of generation seeds in each repetition. Decoding settings and per-candidate generation limits remain fixed within each deployment.\footnote{Code and raw experimental data at \url{https://github.com/luke-feng/whopayforthinking}.}

Serial execution processes one task per call, with its candidate streams decoded together. Batch16 and batch64 process up to 16 and 64 tasks per call, respectively. Every deployment includes serial and batch16 execution, while Qwen3.8 non-thinking additionally includes batch64. Within each repetition, the compared budget and pooling settings run in the same GPU allocation. The complete service study contains \svNewCells{} measured cells. A priced profile combines one deployment, configuration, and repetition. The three quoted deployments give seven deployment/configuration combinations and \rtMainProfiles{} profiles.

The service delivers the first candidate passing fixed screening tests and abstains if none passes. An offline audit distinguishes correct from incorrect delivery. The measured window runs from cohort dispatch through the final answer or abstention, including online selection. Figure~\ref{fig:delivery-composition} shows batch16 outcomes averaged over three repetitions. The on/off labels denote Qwen3.8 thinking modes.

\begin{figure}[t]
 \centering
 \includegraphics[width=\columnwidth]{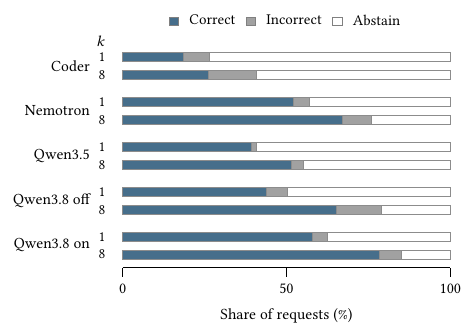}
 \caption{Delivery outcomes under batch16.}
 \Description{Horizontal stacked bars show correct delivery, incorrect delivery, and abstention for each of five deployments. Each deployment has an upper bar for one candidate and a lower bar for eight candidates. Values are descriptive means of three repetitions. Segments sum to 100 percent.}
 \label{fig:delivery-composition}
\end{figure}

Archived tariff anchors support numerical billing comparisons for Qwen3.5 and the two Qwen3.8 deployments. Within each deployment and configuration, the common base charge $b$ is the mean input and delivered-output charge across both budgets and all three repetitions. Metered billing adds the charges for reasoning and unselected candidate tokens. The fixed service price is
\begin{equation}
 P(w)=(1-w)\overline B_A(1)+w\overline B_A(8),
 \label{eq:retail-calibration}
\end{equation}
where $\overline B_A(k)$ averages the metered bill across repetitions. The reference setting uses $w=1/2$. Each resulting price remains fixed across repetitions and budget choices, including responses ending in abstention. The tariff sources and their deployment correspondence are recorded in Appendix~\ref{app:tariffs}.

The numerical contract comparisons use a reference cloud share $\alpha=1/2$. A fixed-capacity comparison uses $L=P(1/2)/2$, matching the reference average upstream payment. Both control assignments are evaluated with $C_M=0$, retaining exit and all ties. Resource costs combine GPU occupancy, central processing unit (CPU) use, and estimated energy consumption. The capital-and-hosting rate is $\rho=2.4$ USD per busy GPU hour, with sensitivity over 1.9--3.4 USD. Energy is estimated from power ratings and declared power-draw ranges. Appendix~\ref{app:anchors} gives the cost assumptions, and a separate cost scope includes startup.

This counterfactual analysis follows the service measurements and preserves each repetition's hardware identity. Appendix~\ref{app:timeline} distinguishes it from the registered analyses. Common profiles and reference-matched charges isolate how payment incentives and participation change the supplied service.

\subsection{Service participation and budget choice}
\label{sec:retail-choices}

Metered bills grow faster than service costs across all \rtMainProfiles{} priced profiles at the central cost scenario, under both full-token billing and the common-base construction. For serial execution, reconstructed bills rise \prSerialBillLo--\prSerialBillHi{} times from one to eight candidates, while costs rise \prSerialCostLo--\prSerialCostHi{} times. Under sharing, the larger metered service requires a smaller break-even cloud share in every profile, $C_H/B_A(k_H)<C_L/B_A(k_L)$. This ordering is independent of the chosen share. Fixed pricing reverses the ordering because its equal revenue must cover a higher cost at the larger budget.

Under sharing and strict user purchase of both budgets, metered billing selects eight in both Qwen3.8 deployments. Fixed pricing selects one except for a model-controlled thinking tie in $r_2$. Qwen3.5 supplies neither service. Appendix~\ref{app:supplier-participation} lists the outcomes for each repetition.

The non-thinking deployment illustrates Proposition~\ref{prop:retail-participation}. Metered billing leaves the cloud with a loss at one candidate and a profit at eight. Fixed pricing reverses these signs, supporting one profitable budget under either controller.

\begin{figure}[t]
 \centering
 \includegraphics[width=\columnwidth]{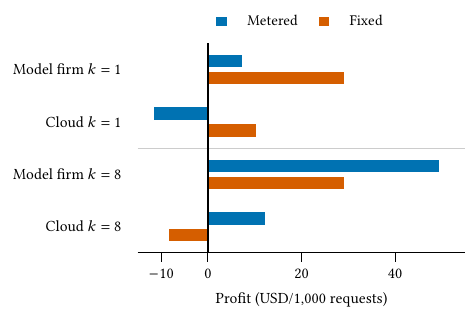}
 \caption{Qwen3.8 off profits ($r_0$, batch16, $\alpha=0.5$).}
 \Description{Paired blue metered and orange fixed-price bars compare model-firm and cloud profits for one and eight candidates. The case is Qwen3.8 non-thinking, batch16, repetition r0, with reference-matched prices and a one-half cloud revenue share. Negative cloud profit occurs at one candidate under metered billing and eight under fixed pricing. These are proposed-service payoffs conditional on execution, not realized losses on rejected services.}
 \label{fig:supplier-profits}
\end{figure}

Figure~\ref{fig:supplier-profits} gives the $r_0$ profits conditional on execution, with bills and costs in Table~\ref{tab:retail-payoffs}. Equal average charges on the reference workload lead to different realized bills.

The thinking deployment also exposes the role of control. In repetition $r_2$, fixed pricing supports both budgets and gives the model firm the same profit at each. Model control retains $\{1,8\}$. The cloud receives the same payment at both budgets and selects the cheaper one. User purchase thresholds determine where these supplier choices result in trade.

Table~\ref{tab:supplier-structure} compares integration and separation for Qwen3.8 non-thinking, batch16. Its outcomes hold in all three repetitions, under strict user purchase and either controller for the separated suppliers. Integration and sharing support the same budgets. Under the fixed capacity fee, metered pricing instead provides no service. The low-budget bill is below the firm's fee, while high-budget cost exceeds the cloud's payment. In $r_0$, that high-budget service would yield \rtIntegrationJointHigh{} USD of joint profit per 1,000 requests, yet the cloud would lose \rtIntegrationCloudLoss{} USD. Fixed pricing restores trade at one candidate. Separate participation determines whether positive joint profit can be realized.

\begin{table}[t]
\caption{Supplier structure (Qwen3.8 off, batch16).}
\label{tab:supplier-structure}
\centering\small
\begin{tabular}{@{}lcc@{}}
\toprule
Supply structure & Metered $A$ & Fixed $D$ \\
\midrule
Integrated & $8$ & $1$ \\
Revenue sharing & $8$ & $1$ \\
Capacity fee & $\varnothing$ & $1$ \\
\bottomrule
\end{tabular}
\end{table}

\subsection{Welfare comparison}
\label{sec:retail-welfare-results}

The budget differences in Section~\ref{sec:retail-choices} determine the services whose welfare must be compared. For Qwen3.8 non-thinking, batch16, $r_0$, increasing the budget from one to eight adds \rtAdditionalCorrect{} correct deliveries and \rtAdditionalWrong{} incorrect deliveries across the 106 tasks. It also increases latency and resource use. These increments give the approximate welfare threshold
\begin{equation}
 v^\star_{r_0}\approx\rtThresholdConst+\rtThresholdError d+\rtThresholdWait h,
 \label{eq:retail-case-value}
\end{equation}
where $v$ is correct-answer value, $d$ is error loss, and $h$ is the cost of an hour of waiting. Here $h=3600\ell$ when latency is recorded in seconds. Using the unrounded threshold, when both selected services are purchased,
\begin{equation}
 W_A-W_D=\frac{\rtAdditionalCorrect}{106}(v-v^\star_{r_0}).
 \label{eq:retail-case-welfare}
\end{equation}
Metered billing yields higher welfare above this threshold, and fixed pricing yields higher welfare below it.

User participation determines where this comparison applies. The corresponding purchase thresholds, shown with rounded coefficients, are
\begin{align}
 \underline v_A&\approx\rtBuyMeteredConst+\rtBuyMeteredError d+\rtBuyMeteredWait h+\rtBuyMeteredOutside u_0,\nonumber\\
 \underline v_D&\approx\rtBuyFixedConst+\rtBuyFixedError d+\rtBuyFixedWait h+\rtBuyFixedOutside u_0.
 \label{eq:retail-case-purchase}
\end{align}
A fixed-price welfare advantage requires $\max\{\underline v_A,\underline v_D\}<v<v^\star_{r_0}$. Metered billing is preferable when $v$ exceeds both purchase thresholds and the welfare threshold. In the capacity-fee case of Table~\ref{tab:supplier-structure}, fixed pricing yields positive welfare whenever its low-budget service is purchased, while metered pricing yields zero because neither budget trades.

\begin{figure}[t]
 \centering
 \includegraphics[width=\columnwidth]{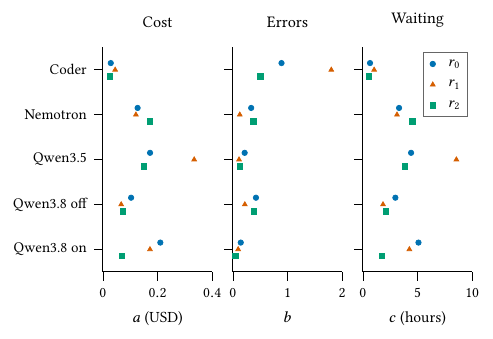}
 \caption{Threshold coefficients (batch16, $\rho=2.4$).}
 \Description{Three dot-plot panels show the cost, error-loss, and waiting-time coefficients in the welfare threshold v star equals a plus b times d plus c times h. Rows show five deployments under batch16 at the central resource-cost scenario. Blue circles, orange triangles, and green squares identify repetitions r0, r1, and r2. The legend is at the upper right of the Waiting panel. The horizontal scales have different units.}
 \label{fig:threshold-coefficients}
\end{figure}

Figure~\ref{fig:threshold-coefficients} reports $(a,b,c)$ in $v^\star=a+bd+ch$ by deployment and repetition, under batch16 at the central cost scenario. Its panels use separate scales. In the tied thinking case, selecting eight under both rules gives equal welfare, while selecting one under fixed pricing gives the comparison in Proposition~\ref{prop:retail-welfare}.

Coder and Nemotron contribute complete service profiles and social budget thresholds. Coder has a low additional resource cost, but its ratio of additional incorrect to correct deliveries ranges from \rtCoderErrorLow{} to \rtCoderErrorHigh{}, compared with \rtNemotronErrorLow{} to \rtNemotronErrorHigh{} for Nemotron. Error losses play a larger role in Coder's budget threshold. For Qwen3.5, $r_1$ has the largest cost and waiting-time coefficients among its three repetitions. For Coder and Nemotron, the required retail-price thresholds are reported in Appendix~\ref{app:retail-details}. For the services with different selected budgets, the welfare ranking turns on whether the additional correct answers justify the accompanying errors, waiting time, and resource costs.

\subsection{Sensitivity analysis}
\label{sec:retail-sensitivity}

\begin{figure}[t]
 \centering
 \includegraphics[width=\columnwidth]{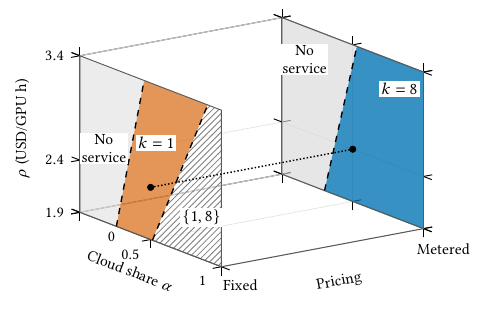}
 \caption{Budget regions under two pricing rules.}
 \Description{The front slice represents fixed service pricing and the rear slice metered billing. The horizontal axis is the cloud revenue share and the vertical axis the capacity-cost rate from 1.9 to 3.4 USD per GPU hour. Under metered billing there are no-service and eight-attempt regions. Under fixed pricing there are no-service, one-attempt, and tied one-or-eight regions. Connected reference points have the same cost and share but select one under fixed pricing and eight under metered billing. Dashed participation boundaries retain all supported tied outcomes.}
 \label{fig:pricing-phase}
\end{figure}

The Qwen3.8 non-thinking budget difference persists beyond the reference price and share. Across all three batch16 repetitions, both service and startup-inclusive cost accounting, and the evaluated energy endpoints, metered billing selects eight candidates while fixed pricing selects one within the strict switching intervals summarized in Table~\ref{tab:retail-sensitivity}.

\begin{table}[t]
\caption{Approximate budget-switch ranges.}
\label{tab:retail-sensitivity}
\centering\small
\begin{tabular}{@{}lcl@{}}
\toprule
Parameter & Approx. range & Fixed parameters \\
\midrule
Reference weight $w$ & 0.35--0.63 & $\rho=2.4,\ \alpha=0.5$ \\
Capacity cost $\rho$ & 2.00--3.08 & $w=0.5,\ \alpha=0.5$ \\
Cloud share $\alpha$ & 0.39--0.60 & $w=0.5,\ \rho=2.4$ \\
\bottomrule
\end{tabular}
\end{table}

Each row varies one parameter while retaining the stated reference values for the others. Displayed endpoints are rounded. At the exact boundaries, participation or budget ties remain possible, as recorded in the reproducibility materials. Lower fixed prices can prevent service provision, while higher prices can make both budgets acceptable and leave the model firm indifferent between them.

Figure~\ref{fig:pricing-phase} shows the joint cost--share comparison for Qwen3.8 non-thinking, batch16, $r_0$. It conditions on strict user purchase of both budgets and uses model control. The front and rear slices retain the same service profiles and price calibration. At the marked reference point $(\alpha,\rho)=(0.5,2.4)$, fixed pricing selects one candidate and metered billing selects eight. Hatching marks the larger-share region where both budgets are optimal under fixed pricing. Dashed boundaries retain the participation ties.

Reconstructing the complete token bill instead of using the common base charge preserves all \rtMainProfiles{} main-tariff budget profiles. An alternative promotional tariff changes participation in one profile near the profitability boundary. Input and delivered-output charges matter at that boundary.

Pooling reduces the additional GPU time from one to eight candidates in all \svNewContrasts{} within-repetition comparisons, removing \prGpuShareLo--\prGpuShareHi\% of the serial increment. Service accounting includes online answer selection, so these reductions cover the provision window.

Across all \svNewPreRows{} service scenarios under complete non-delivered-token billing, a share covering the small serial service also covers the large one. This leaves the pooling participation band in Proposition~\ref{prop:retail-pooling-admission} empty (Appendix~\ref{app:pooling-admission}). The retail comparisons hold execution fixed and change user pricing.

Model performance changes the welfare side. In the two Qwen3.8 comparisons using matching GPU models across thinking modes, thinking improves correct delivery and reduces errors at a given budget, while increasing cost and waiting time. Higher error losses favor that quality improvement, whereas higher waiting costs raise the value required to justify it. These differences change the welfare thresholds even when the pricing rules induce the same budget pattern.

Across the evaluated price weights, shares, and resource rates, participation and budget indifference determine the boundaries of the Qwen3.8 non-thinking budget switch. The supplemental material preserves the original pooling diagnostics, pipeline measurements, and A100 extension.

\section{Related Work}
\label{sec:related}

\begin{table}[t]
 \caption{Related pricing and contracting models.}
 \label{tab:related-comparison}
 \centering\small
 \begin{tabular}{@{}>{\raggedright\arraybackslash}p{.19\columnwidth}@{\hspace{5pt}}>{\raggedright\arraybackslash}p{.26\columnwidth}@{\hspace{5pt}}>{\raggedright\arraybackslash}p{.20\columnwidth}@{\hspace{5pt}}>{\raggedright\arraybackslash}p{.28\columnwidth}@{}}
 \toprule
 Study & Focus & Cost bearers & Evidence \\
 \midrule
 Compute games~\cite{artola2026games} & Inference budget and auction & Provider & Math/science benchmarks \\
 Menu pricing~\cite{bergemann2026menu} & Menus and model competition & Provider & Provider plans \\
 Tokenization \cite{artola2026overcharging} & Token reporting & Provider & LLM tokenization experiments \\
 Revenue sharing~\cite{cachon2005revenue} & Price and quantity & Supplier, retailer & Analytical model \\
 \textbf{This work} & Budget and participation & Model firm, cloud & Measured service profiles \\
 \bottomrule
 \end{tabular}
\end{table}

\subsection{Metered pricing and strategic inference}

Payments for services that users cannot easily evaluate create incentives to change provision or billing, a central concern of credence-goods theory \cite{dulleck2006credence}. Test-Time Compute Games studies strategic inference budgets among competing providers and proposes a price-and-quality auction \cite{artola2026games}. Menu pricing derives token-budget contracts for heterogeneous users and extends them to model competition \cite{bergemann2026menu}. Tokenization research studies profitable misreporting and mechanisms for truthful billing \cite{artola2026overcharging}. These studies connect user payments to provider behavior. Table~\ref{tab:related-comparison} locates the present comparison, which holds tariffs and token counts fixed while separating the model firm's and cloud's participation decisions.

\subsection{Payments and cost responsibility}

The allocation of marginal returns and costs is a standard concern of industrial organization \cite{tirole1988}. Revenue-sharing contracts can coordinate retail quantities and prices under specified supply-chain conditions \cite{cachon2005revenue}. Collaborative-service models relate contract choice to the buyer's and vendor's contributions to output \cite{roels2010contracting}. Cloud services add metered infrastructure and different commitment and availability terms \cite{armbrust2009clouds,aws2026reserved,aws2026spot}. Here a fixed user price removes the retail revenue increment, while a fixed capacity fee removes the model firm's additional upstream payment. The integrated comparison shows how separate cost responsibility can prevent a jointly profitable inference service from being supplied.

\subsection{Quality, selection, and service cost}

Test-time scaling studies how additional computation improves answer quality \cite{snell2024scaling}. Verification and self-consistency determine which gains reach users \cite{cobbe2021verifiers,wang2023consistency}, while imperfect verifiers can limit resampling benefits \cite{stroebl2024limits}. Quality-based and adaptive contracts use automated evaluations to determine payments \cite{saig2024statistical,saig2026adaptive}. Serving research changes execution costs through scheduling, key-value (KV) cache management \cite{yu2022orca,kwon2023pagedattention}, and the separation or interleaving of prompt processing and token generation \cite{zhong2024distserve,agrawal2024sarathi}. Analytical inference models link hardware use to efficiency \cite{pope2022efficient}. This study measures delivered correctness, errors, latency, and resources through online answer selection, then combines these service profiles with supplier participation to compare billing rules and welfare thresholds.

\section{Conclusion}
\label{sec:conclusion}

Metered billing rewards additional intermediate computation with additional revenue, while fixed service pricing removes that revenue increment. How this changes the attempt budget depends on control and cost responsibility. Under revenue sharing, metered billing rewards both suppliers for a larger bill. With fixed pricing, the cloud receives the same payment as execution cost rises, while a model firm bearing no additional cost can remain indifferent between budgets. Under a fixed capacity fee, separate participation can block a jointly profitable service that an integrated supplier would provide.

Under revenue sharing and user purchase, the calibrated comparisons show that equal average charges on the reference workload can support different computation budgets. Where metered billing induces a larger budget and both resulting services are purchased, it yields higher welfare when the value of additional correct answers exceeds the added error losses, waiting costs, and resource costs. Fixed service pricing yields higher welfare when that balance is reversed. Payments redistribute surplus. Welfare depends on how revenue, cost, and control align private computation choices with net social value.

\section*{AI Use Declaration}

AI tools assisted with code development, data analysis, and internal manuscript review. The authors determined the research questions and experimental design, conducted the experiments, and take responsibility for the final content.

\label{sec:main-end}
\bibliographystyle{ACM-Reference-Format}
\bibliography{references}

\appendix
\raggedbottom
\section{Proofs and Equilibrium Computation}
\label{app:proofs}

\subsection{Cost responsibility and finite menus}
\label{app:generalizations}

To establish local robustness, reallocate fixed total cost $T_\pi$ as $C_{M,\pi}=\beta T_\pi$ and $C_\pi=(1-\beta)T_\pi$. Every strict comparison supporting a baseline outcome is affine in $\beta$, say $a_j+b_j\beta>0$ with $a_j>0$. Choosing $0<\beta<1$ small enough that $|b_j|\beta<a_j/2$ preserves all signs and traded budgets. Total cost and value at each budget are unchanged, preserving welfare differences. Both suppliers bear positive real cost wherever $T_\pi(k)>0$. The argument applies to strict pooling and retail-pricing outcomes.

For Proposition~\ref{prop:finite-budgets}, cloud profit satisfies $f_1(k)=f_0(k)+s_k$. Optimality at $x\in K_0$ and $y\in K_1$ gives
\[
 0\leq f_0(x)-f_0(y)\leq s_y-s_x.
\]
If $y<x$, nondecreasing savings force equality, making both budgets optimal in both configurations. Choosing the smallest or largest optima rules out a decrease in either extreme. Strict downstream participation and positive initial best profit exclude rejection and exit. Arbitrary selections from tied optima need not be ordered. Under model control, fixed $E$ and lower cloud costs give $A_0\subseteq A_1$. Positive, strictly increasing $G_\pi$ makes the largest accepted budget uniquely preferable to smaller budgets and exit.

The savings condition is necessary for monotonicity uniformly over baseline profits. If $a<b$ but $s_a>s_b$, choose $f_0(b)-f_0(a)\in(0,s_a-s_b)$ and place all other profits sufficiently low. A common additive constant makes both best profits positive. The unique optimum falls from $b$ to $a$. A sufficiently large constant $E$ realizes this through $C_0=E-f_0$ and $C_1=C_0-s\geq0$. Symmetrically, with fixed $B,E$, lower model-firm cost expands its acceptance set under cloud control.

\subsection{Participation ties and exit}
\label{app:terminals}

The numerical solver retains pure subgame-perfect outcomes by backward induction. Let $O_a$ contain the outcomes available after action $a$, and let $u_i$ be the current decision maker's payoff. An outcome $o\in O_a$ is supported exactly when
\begin{equation}
 u_i(o)\geq\max_{b\ne a}\min_{z\in O_b}u_i(z).
 \label{eq:terminals}
\end{equation}
Necessity follows because a deviation to $b$ yields at least its minimum continuation payoff. For sufficiency, choose $o$ on path and a minimizing equilibrium continuation after each deviation. Recursion preserves acceptance or rejection at zero payoff, budget ties, and exit. All supported pure outcomes enter $\mathcal O_j$, and $\mathcal G$ contains their cross-pricing welfare differences.

\subsection{Configuration-dependent profiles}
\label{sec:identification}

When execution changes bills as well as costs, the cloud's sharing-margin change decomposes as
\begin{equation}
 m_1-m_0=(s_H-s_L)+\alpha(\Delta B_1-\Delta B_0),
 \label{eq:margin-decomp}
\end{equation}
where $m_\pi=\alpha\Delta B_\pi-\Delta C_\pi$. The identity separates cost savings from bill drift across configurations. Proposition~\ref{prop:finite-budgets} fixes $E$ across configurations. With configuration-dependent bills under sharing, increasing differences must be checked for the full profit $\alpha B_\pi-C_\pi$.

\subsection{Flexible retail and upstream terms}
\label{app:contract-benchmarks}

The following complete-information benchmarks allow prices or upstream payments to adjust. Acceptance at zero surplus or profit is retained as a supported continuation.

\subsubsection{Retail pricing}
For a representative risk-neutral user with $V_k>u_0$, consider the pointwise schedule $B_k=V_k-u_0$, or the limit of a common strict discount. It gives $\Delta B=\Delta V$. With $C_M=0$, $\Delta V>0$, and $\Delta C>0$, cloud-controlled sharing has margin $\alpha\Delta V-\Delta C$. The overprovision region $\Delta V<\Delta C<\alpha\Delta B$ is empty because $\alpha<1$.

The model-controlled participation channel remains possible. The cloud can accept the high budget when $C_\pi(k_H)\leq\alpha(V_H-u_0)$. Under the remaining conditions of Proposition~\ref{prop:retail-pooling-admission}, its welfare change is still $\Delta V-\Delta C_1+s_L$. These conclusions condition on the pointwise schedule, not an equilibrium of unrestricted menu pricing, where prices at other budgets can also steer the cloud.

\subsubsection{Upstream payments}
Suppose the model firm proposes the contract and every considered budget satisfies user participation. Under model control, let $k^*$ maximize $B_k-C(k)-C_M(k)$. If this maximum is nonnegative, a capacity fee $L=C(k^*)$ supports $k^*$ at zero cloud profit. The firm's best attainable profit, including exit, is
\[
 \max\bigl\{0,\max_{k\in\Kset}[B_k-C(k)-C_M(k)]\bigr\}.
\]
Cloud participation bounds profit by this expression. Any cloud-accepted alternative has cost at most the fee and cannot give higher firm profit than the integrated maximizer. The cost-covering fee therefore attains the bound. Combined with pointwise retail extraction, the firm maximizes social welfare.

For cloud control, specialize to the two-budget menu with $C_M=0$, $0<B_L<B_H$, and $0\leq C_L<C_H$. Supporting the high budget through sharing requires both an incentive and participation,
\[
 \alpha\geq\alpha_H
 =\max\left\{\frac{\Delta C}{\Delta B},\frac{C_H}{B_H}\right\}.
\]
When $\alpha_H<1$, the cloud's rent at that threshold is $\alpha_H B_H-C_H$. These two inequalities follow by comparing cloud profits across budgets and with exit.

With a signed fixed transfer, the two-part payment $E(k)=F+\alpha B_k$ can remove this rent \cite{oi1971}. When $\alpha_C=\Delta C/\Delta B\in(0,1)$, setting $\alpha=\alpha_C$ and $F=C_L-\alpha_C B_L$ makes cloud profit zero at both budgets. Either user-accepted budget with $B_k\geq C_k$ is supportable. The firm's best attainable profit becomes $\max\{0,\max_k(B_k-C_k)\}$. This follows by substituting the transfer into the two cloud payoffs. Flexible transfers thus recover the integrated supplier objective as an attainable benchmark.

\section{Experimental and Price Calibration}
\label{app:calibration}
\label{app:service-measurements}

\subsection{Deployments and service accounting}
\label{app:admission}

Qwen3.8-27B runs with thinking low or off. The other deployments are Qwen3.5-9B and Nemotron-3-Nano-30B-A3B with default thinking, and non-thinking Qwen2.5-Coder-7B-Instruct. Table~\ref{tab:pilots} records their final calibration limits and capped candidates. Each pilot contains 128 candidates. Settings were frozen before correctness labeling. The non-thinking Qwen3.8 control inherits the thinking cap and context.

\begin{table}[t]
 \caption{Frozen generation settings.}
 \label{tab:pilots}
 \centering\small
 \begin{tabular}{@{}lrrr@{}}
 \toprule
 Deployment & \shortstack{Output /\\context cap} & \shortstack{Call limit\\(s)} & Capped \\
 \midrule
 Qwen3.8 on & 24,576 / 32,768 & 3,600 & 14/128 \\
 Qwen3.8 off & 24,576 / 32,768 & 3,600 & 5/128 \\
 Qwen3.5 & 24,576 / 32,768 & 3,600 & 44/128 \\
 Nemotron & 24,576 / 32,768 & 3,600 & 43/128 \\
 Coder & 4,096 / 16,384 & 1,800 & 0/128 \\
 \bottomrule
 \end{tabular}
\end{table}

The service window runs from cohort dispatch through the last online selection or abstention. CPU counters cover the inference server, selector, and evaluator over that window. Offline correctness audit and idle harness roles are excluded. Startup-inclusive sensitivity adds the full GPU step and cumulative CPU of those service roles. Each repeated budget/configuration comparison uses one GPU allocation. Repetitions can differ in GPU model and share nodes. All service calls completed within their limits, with a maximum of 3{,}503 s. Service results are descriptive on the fixed tasks and observed allocations.

\subsection{Matched pooling comparisons}
\label{app:matched-pooling}

Table~\ref{tab:service-h1} gives all 18 within-repetition comparisons. Here $d_0$ and $d_1$ are the increases in GPU seconds per request from one to eight candidates under serial and pooled execution. Their difference measures the reduction in this resource increment. Each row uses four cells from one GPU allocation. H100 denotes H100 HBM3, and NVL denotes H100 NVL. Contrasts are calculated before rounding.


\begin{table*}[t]
\caption{Online-service cost contrasts (GPU s/request).}
\label{tab:service-h1}
\centering\small
\begin{tabular}{lllrrrrr}
\toprule
Deployment & Pooling & Repeat & GPU & $d_0$ & $d_1$ & $d_0-d_1$ & $(d_0-d_1)/d_0$ \\
\midrule
Coder & batch16 & r0 & NVL & 4.8 & 3.6 & 1.16 & 24.3\% \\
Coder & batch16 & r1 & H200 & 3.9 & 3.1 & 0.74 & 19.3\% \\
Coder & batch16 & r2 & NVL & 4.1 & 3.4 & 0.61 & 15.2\% \\
Nemotron & batch16 & r0 & H200 & 66.3 & 26.1 & 40.19 & 60.6\% \\
Nemotron & batch16 & r1 & H200 & 66.7 & 26.4 & 40.24 & 60.4\% \\
Nemotron & batch16 & r2 & NVL & 91.4 & 38.1 & 53.35 & 58.4\% \\
Qwen3.5 thinking & batch16 & r0 & H100 & 40.7 & 33.0 & 7.63 & 18.8\% \\
Qwen3.5 thinking & batch16 & r1 & NVL & 47.7 & 41.8 & 5.91 & 12.4\% \\
Qwen3.5 thinking & batch16 & r2 & H100 & 42.0 & 33.0 & 9.06 & 21.6\% \\
Qwen3.8 non-thinking & batch16 & r0 & H100 & 76.1 & 26.9 & 49.20 & 64.6\% \\
Qwen3.8 non-thinking & batch16 & r1 & NVL & 89.2 & 21.3 & 67.86 & 76.1\% \\
Qwen3.8 non-thinking & batch16 & r2 & H100 & 96.8 & 26.0 & 70.82 & 73.1\% \\
Qwen3.8 non-thinking & batch64 & r0 & H100 & 76.1 & 36.8 & 39.31 & 51.7\% \\
Qwen3.8 non-thinking & batch64 & r1 & NVL & 89.2 & 28.4 & 60.82 & 68.2\% \\
Qwen3.8 non-thinking & batch64 & r2 & H100 & 96.8 & 33.1 & 63.69 & 65.8\% \\
Qwen3.8 thinking & batch16 & r0 & H100 & 86.6 & 60.5 & 26.08 & 30.1\% \\
Qwen3.8 thinking & batch16 & r1 & NVL & 97.2 & 50.3 & 46.95 & 48.3\% \\
Qwen3.8 thinking & batch16 & r2 & H200 & 70.7 & 21.8 & 48.87 & 69.1\% \\
\bottomrule
\end{tabular}
\end{table*}

All 18 contrasts are positive. The service window includes online selection, so the measured reduction covers generation and answer delivery. These resource increments contribute to $\Delta C_\pi$ after cost conversion. Changes in token bills enter the sharing margin separately through equation~\eqref{eq:margin-decomp}.

\subsection{Tariff anchors and reference prices}
\label{app:tariffs}

The 8 October catalog anchors quote Qwen3.5-9B at 0.10/0.15 USD per million input/output tokens, with 0.17/0.25 as an alternative. Qwen3.8-27B has input and output list rates of 0.20 and 2.50 USD. Its promotional rates are 0.15 and 1.875, and the alternative rates are 0.40 and 3.00. Only the first provider's catalog response is archived. Exact hosted-weight identity is unverified, so these rates serve as calibration anchors. Reasoning and unselected output use the output rate. Coder and Nemotron have no durable matching tariff anchor.

The controlled comparison sets $B_A(k)=b+p_hH_k$, where $b$ averages input and delivered-output charges across both budgets and three repetitions. The reference price $P(1/2)$ averages those metered bills over the same six cells. A separate sensitivity uses each cell's complete observed token bill. All \rtMainProfiles{} main-tariff budget profiles agree between constructions, with one promotional-tariff participation difference across seven deployment/configuration combinations and three repetitions. The capacity fee remains $P(1/2)/2$ when $w$ varies.

\subsection{Resource-cost conversion}
\label{app:anchors}

Per-request cost divides cohort GPU, CPU, and energy costs by 106. The central rates are 2.4 USD per busy GPU hour for energy-exclusive capital and hosting, 0.03 USD per CPU core hour, and 0.15 USD per kWh. Energy is estimated at the midpoint of 35--90\% of rated GPU power during generation and 10--15\% otherwise. Sensitivity varies the cell energy endpoints and includes startup. Rates are declared scenarios, and user values remain parameters.

The capital-and-hosting model uses a 285,000 USD eight-GPU server, 18\% networking/storage overhead, five-year amortization at 9.33\%, annual operations at 3.5\% of capital, and hosting at 150 USD per kW-month for 10.2 kW of server power with a 1.1 multiplier. The central utilization assumption is 0.7. Utilization from 0.9 to 0.5 gives the 1.9--3.4 USD range. Public prices provide context, not observed supplier costs. The full assumptions, source dates, anonymous records, and CPU-overlap sensitivity accompany the calculations.

\subsection{Protocol timeline}
\label{app:timeline}

The cohort was fixed on 9 October. The 10 October protocol specified pooling hypotheses, paired bootstrap, a cost-spread pause, and configuration comparisons. The initial pipeline included all-candidate judging. Later service measurements use online selection and exclude offline audit. The 21 October amendment admitted hardware variants, and subsequent reruns changed selected call limits. The cost grid expanded on 24 October, and the service measurements and retail-pricing counterfactual were analyzed on 25 October. The integrated-supplier benchmark was added on 26 October. The original primary bootstrap was withheld by its pause rule. Separate pipeline diagnostics remain in the supplemental material and are not intervals for the service comparisons.

\section{Numerical Checks and Price Thresholds}
\label{app:retail-details}

The supplemental material supplied with this submission retains the service inputs, exact calculations, and historical analyses with their original scopes.

\subsection{Service profiles and welfare thresholds}

Table~\ref{tab:retail-profiles} gives batch16 delivery rates and resource costs for all five deployments. Delivery rates average the three repetitions, while each cost range spans their separate allocations at the central scenario. The profiles supply the quality and cost inputs for the budget comparison.

\begin{table}[t]
\caption{Batch16 service profiles.}
\label{tab:retail-profiles}
\centering\small
\begin{tabular}{@{}lcrrr@{}}
\toprule
Deployment & $k$ & $q^+$ (\%) & $q^-$ (\%) & \shortstack{Cost\\(USD/1k)} \\
\midrule
Coder & 1 & 18.6 & 7.9 & 0.72--0.76 \\
 & 8 & 26.1 & 14.8 & 2.87--3.16 \\
Nemotron & 1 & 52.2 & 4.7 & 8.97--12.79 \\
 & 8 & 67.0 & 8.8 & 26.93--38.67 \\
Qwen3.5 & 1 & 39.3 & 1.6 & 10.75--10.88 \\
 & 8 & 51.6 & 3.5 & 33.48--39.11 \\
Qwen3.8 off & 1 & 43.7 & 6.6 & 18.67--20.26 \\
 & 8 & 65.1 & 13.8 & 34.75--37.15 \\
Qwen3.8 on & 1 & 57.9 & 4.7 & 19.56--25.32 \\
 & 8 & 78.3 & 6.6 & 34.54--66.94 \\
\bottomrule
\end{tabular}
\end{table}

Table~\ref{tab:retail-thresholds} records the coefficients plotted in Figure~\ref{fig:threshold-coefficients}. For each repetition, the welfare difference from one to eight candidates is $\Delta q^+(v-a-bd-ch)$, with $h$ measured in USD per hour of waiting. A larger cost coefficient $a$ raises the required answer value even when error loss and waiting cost are zero. The coefficients $b$ and $c$ capture the extra error loss and waiting cost per additional correct delivery.

\begin{table}[t]
\caption{Welfare-threshold coefficients by deployment.}
\label{tab:retail-thresholds}
\centering\small
\begin{tabular}{@{}llrrr@{}}
\toprule
Deployment & Repeat & $a$ & $b$ & $c$ \\
\midrule
Coder & $r_0$ & 0.0287 & 0.8889 & 0.6480 \\
 & $r_1$ & 0.0447 & 1.8000 & 1.0229 \\
 & $r_2$ & 0.0247 & 0.5000 & 0.5446 \\
Nemotron & $r_0$ & 0.1269 & 0.3333 & 3.3068 \\
 & $r_1$ & 0.1207 & 0.1250 & 3.1162 \\
 & $r_2$ & 0.1714 & 0.3750 & 4.5158 \\
Qwen3.5 & $r_0$ & 0.1724 & 0.2143 & 4.4047 \\
 & $r_1$ & 0.3340 & 0.1111 & 8.5517 \\
 & $r_2$ & 0.1505 & 0.1250 & 3.8625 \\
Qwen3.8 off & $r_0$ & 0.1031 & 0.4211 & 2.9670 \\
 & $r_1$ & 0.0668 & 0.2174 & 1.8366 \\
 & $r_2$ & 0.0728 & 0.3846 & 2.0970 \\
Qwen3.8 on & $r_0$ & 0.2101 & 0.1429 & 5.0805 \\
 & $r_1$ & 0.1722 & 0.0952 & 4.2478 \\
 & $r_2$ & 0.0690 & 0.0435 & 1.7295 \\
\bottomrule
\end{tabular}
\end{table}

The budget comparison also retains the purchase conditions in Section~\ref{sec:retail-welfare-results}. Coder and Nemotron provide social thresholds and the required-price conditions below. The three quoted deployments additionally support the numerical retail-contract comparisons.

\subsection{Supplier participation}
\label{app:supplier-participation}

Table~\ref{tab:retail-decisions} gives all sharing outcomes under strict user purchase. Subscripts $M,C$ identify control, and $\varnothing$ denotes no service.

\begin{table}[t]
\caption{Budget choices (batch16, $\alpha=0.5$, $\rho=2.4$).}
\label{tab:retail-decisions}
\centering\small
\begin{tabular}{@{}llcccc@{}}
\toprule
Deployment & Repeat & $A_M$ & $D_M$ & $A_C$ & $D_C$ \\
\midrule
Qwen3.5 & $r_0$ & $\varnothing$ & $\varnothing$ & $\varnothing$ & $\varnothing$ \\
 & $r_1$ & $\varnothing$ & $\varnothing$ & $\varnothing$ & $\varnothing$ \\
 & $r_2$ & $\varnothing$ & $\varnothing$ & $\varnothing$ & $\varnothing$ \\
Qwen3.8 off & $r_0$ & $8$ & $1$ & $8$ & $1$ \\
 & $r_1$ & $8$ & $1$ & $8$ & $1$ \\
 & $r_2$ & $8$ & $1$ & $8$ & $1$ \\
Qwen3.8 on & $r_0$ & $8$ & $1$ & $8$ & $1$ \\
 & $r_1$ & $8$ & $1$ & $8$ & $1$ \\
 & $r_2$ & $8$ & $\{1,8\}$ & $8$ & $1$ \\
\bottomrule
\end{tabular}
\end{table}

Table~\ref{tab:retail-payoffs} supplies the bills, costs, and profits underlying Figure~\ref{fig:supplier-profits}, in USD per 1,000 requests. It uses Qwen3.8 non-thinking, batch16, $r_0$, with $\alpha=1/2$ and $C_M=0$. These are payoffs conditional on executing each proposed service. Negative cloud profit explains rejection. The fixed price averages all three repetitions.

\begin{table}[t]
\caption{Qwen3.8 off payoffs ($r_0$, batch16, $\alpha=0.5$), USD/1,000 requests.}
\label{tab:retail-payoffs}
\centering\small
\begin{tabular}{@{}lrrrrr@{}}
\toprule
Pricing & $k$ & Bill & Cost & $\Pi_M$ & $\Pi_C$ \\
\midrule
Metered & 1 & 14.27 & 18.67 & 7.13 & $-11.54$ \\
Metered & 8 & 98.55 & 37.15 & 49.27 & 12.13 \\
Fixed & 1 & 57.77 & 18.67 & 28.89 & 10.22 \\
Fixed & 8 & 57.77 & 37.15 & 28.89 & $-8.26$ \\
\bottomrule
\end{tabular}
\end{table}

\begin{table}[t]
\caption{Price floors for Coder and Nemotron.}
\label{tab:retail-unquoted}
\centering\small
\begin{tabular}{@{}llrrrr@{}}
\toprule
Deployment & Repeat & $\gamma_1$ & $\eta_1$ & $\gamma_8$ & $\eta_8$ \\
\midrule
Coder & $r_0$ & 2.4887 & 1.8197 & 1.3252 & 1.7682 \\
 & $r_1$ & 2.6445 & 1.8329 & 1.2223 & 1.7944 \\
 & $r_2$ & 2.5385 & 1.8182 & 1.2904 & 1.7745 \\
Nemotron & $r_0$ & 0.6328 & 0.0377 & 0.2367 & 0.0375 \\
 & $r_1$ & 0.6386 & 0.0373 & 0.2360 & 0.0368 \\
 & $r_2$ & 0.8487 & 0.0354 & 0.3307 & 0.0365 \\
\bottomrule
\end{tabular}
\end{table}

Under strict user purchase, the capacity-fee outcomes in Table~\ref{tab:supplier-structure} hold whenever
\[
 \max\{B_A(k_L),C_L\}<L<\min\{P,C_H\}.
\]
The model firm rejects the low metered budget, and the cloud rejects the high one. Both profit from the low fixed-price service. The common interval contains \rtIntegrationFeeLo--\rtIntegrationFeeHi{} USD per 1,000 requests, rounded inward, and the reference fee of \rtIntegrationFee{} USD. Exact boundary ties are retained.

\subsection{Models without quoted tariffs}

Coder and Nemotron identify resource and quality requirements without a matching retail quote. Let $I_k,N_k$ be mean input and total output tokens at budget $k$. The complete bill is $B_k=(p_iI_k+p_oN_k)/10^6$, including delivered, reasoning, and unselected output at rates per million tokens.

For $N_k>0$, strict cloud participation requires
\begin{equation}
 p_o>\frac{\gamma_k}{\alpha}-\eta_kp_i,\qquad
 \gamma_k=\frac{10^6C_k}{N_k},\quad \eta_k=\frac{I_k}{N_k}.
 \label{eq:unquoted-price-floor}
\end{equation}
Strict user purchase also requires $p_o<[10^6(V_k-u_0)-p_iI_k]/N_k$. Table~\ref{tab:retail-unquoted} gives these price-bound coefficients at each budget. Social provision compares each budget with exit through $v>[C_k+dq_k^-+h\tau_k/3600+u_0]/q_k^+$ when $q_k^+>0$, alongside the welfare threshold in Figure~\ref{fig:threshold-coefficients}.

\subsection{Pooling admission scans}
\label{app:pooling-admission}

The cloud-side inequalities for the separate pooling-induced participation switch in Proposition~\ref{prop:retail-pooling-admission} have no admissible share in any of the \admPreRows{} pipeline and \svNewPreRows{} service scenario rows under complete non-delivered-token billing. Both counts include rows for the minimum-cost pooled menu and reuse measurements under different price and cost assumptions.

For serial bills and costs, $C_0(k_H)/B_0(k_H)$ is below $C_0(k_L)/B_0(k_L)$ in every service row. A share sufficient to cover the smaller service therefore covers the larger one, excluding the initial acceptance pattern required by Proposition~\ref{prop:retail-pooling-admission}.

\end{document}